\newcounter{subequation}
	\newenvironment{subequation}%
	{\addtocounter{equation}{-1}%
	\stepcounter{subequation}%
	\begin{equation}}%
	{\end{equation}%
}
\newcommand{\ts}{\textstyle}
\newcommand{\bA}{\mathbf{A}}
\newcommand{\bb}{\mathbf{b}}
\newcommand{\be}{\mathbf{e}}
\newcommand{\bF}{\mathbf{F}}
\newcommand{\bm}{\mathbf{m}}
\newcommand{\calm}{\mathbf{m}}
\newcommand{\bR}{\mathbf{R}}
\newcommand{\bT}{\mathbf{T}}
\newcommand{\bg}{\mathbf{g}}
\newcommand{\beq}{\begin{equation}}
\newcommand{\eeq}{\end{equation}}
\newcommand{\bseq}{\begin{subequation}}
\newcommand{\eseq}{\end{subequation}}
\newcommand{\refeq}[1]{(\ref{#1})}
\newcommand{\fM}{\mathfrak{M}}
\newcommand{\fN}{\mathfrak{N}}
\newcommand{\fS}{\mathfrak{S}}
\newcommand{\p}{\partial}
\newcommand{\cA}{\mathcal{A}}
\newcommand{\cF}{\Phi}
\newcommand{\cU}{\mathcal{U}}
\newcommand{\cR}{\mathcal{R}}
\newcommand{\cD}{\mathcal{D}}
\newcommand{\cE}{{\mathcal E}}
\newcommand{\cG}{{\mathcal G}}
\newcommand{\cL}{{\mathcal L}}
\newcommand{\cM}{{\mathcal M}}
\newcommand{\cN}{{\mathcal N}}
\newcommand{\cQ}{{\mathcal Q}}
\newcommand{\cS}{{\mathcal S}}
\newcommand{\cT}{\mathcal{T}}
\newcommand{\M}{\mathcal{M}}
\newcommand{\Cset}{{\mathbb C}}
\newcommand{\Rset}{{\mathbb R}}
\newcommand{\Sset}{{\mathbb S}}
\newcommand{\Zset}{{\mathbb Z}}
\newcommand{\la}{\lambda}
\newcommand{\Del}{\Delta}
\newcommand{\al}{\alpha}
\newcommand{\ga}{\gamma}
\newcommand{\Ga}{\Gamma}
\newcommand{\ep}{\epsilon}
\newcommand{\ka}{\kappa}
\newcommand{\Om}{\Omega}
\newcommand{\Si}{\Sigma}
\newcommand{\nab}{\nabla}
\newcommand{\half}{\frac{1}{2}}
\newcommand{\bna}{\begin{eqnarray}}
\newcommand{\ena}{\end{eqnarray}}
\newcommand{\bea}{\begin{eqnarray*}}
\newcommand{\eea}{\end{eqnarray*}}
\newcommand{\ben}{\begin{enumerate}}
\newcommand{\een}{\end{enumerate}}
\newcommand{\bi}{\begin{itemize}}
\newcommand{\ei}{\end{itemize}}
\newcommand{\N}{{\mathcal N}}
\newcommand{\RR}{{\mathbb R}}
\newcommand{\Hset}{{\mathbb H}}
\newcommand{\sm}{\textsc{m}}
\newcommand{\scp}{\textsc{p}}
\newcommand{\sq}{\textsc{q}}
\newcommand{\zM}{\cM_0}
\newcommand{\zg}{\bg_0}
\newcommand{\zF}{\bF_0}
\newcommand{\zfM}{\fM_0}
\newtheorem{thm}{THEOREM}[section]
\newtheorem*{thm*}{THEOREM}
\theoremstyle{definition}
\theoremstyle{remark}
\begin{document}
%
%

\title{On {a} zero-gravity limit of the Kerr--Newman \\ spacetimes and their electromagnetic fields}
\author{A. Shadi Tahvildar-Zadeh\footnote{Department of Mathematics, Rutgers, The State University of New Jersey, 
110 Frelinghuysen Rd., Piscataway, NJ 08854}}

\date{October 1, 2014; revised 11/14/14}

\maketitle

\begin{abstract}

\noindent
We discuss the limit of vanishing $G$ (Newton's constant of universal gravitation) of the {maximal 
analytically extended} Kerr--Newman electrovacuum spacetimes {represented in Boyer--Lindquist coordinates}.
We investigate the topologically nontrivial spacetime $\M_0$ emerging in this limit and show that it
consists of two copies of flat Minkowski spacetime cross-linked at a timelike solid
cylinder (spacelike 2-disk $\times$ timelike $\RR$).
As $G\to 0$, the electromagnetic fields of the Kerr--Newman spacetimes converge to nontrivial solutions of
Maxwell's equations on this background spacetime $\M_0$.
We show how to obtain these fields by solving Maxwell's equations with singular sources
supported only on a circle in a spacelike slice of $\M_0$.  
These sources do not suffer from any of the pathologies that plague the alternate sources found
in previous attempts to interpret the Kerr--Newman fields on the topologically simple Minkowski spacetime. We characterize the singular behavior of these sources and  prove that the Kerr--Newman electrostatic potential and magnetic stream function are the unique solutions of the Maxwell equations among all functions that have the same blow-up behavior at the ring singularity.
\end{abstract}

\section{Introduction}

The Kerr--Newman (KN) electromagnetic spacetimes \cite{NCCEPT65} are a three-parameter
family of triplets $(\M,\bg,\bF)$ consisting of a four-dimensional manifold $\M$ endowed
with a Lorentzian metric $\bg$ and an {\em electromagnetic field}, i.e. an exact 2-form $\bF=d\bA$.
The manifold $(\cM,\bg)$ is asymptotically flat and possesses two commuting Killing fields in such
a way that, in a neighborhood of spatial infinity, one is timelike with $\RR$-orbits
(parametrized by $t$) and the other spacelike with $\Sset^1$-orbits (parametrized by $\varphi$). Their ``outer regions" thus belong to the class of {\em stationary, axisymmetric} spacetimes studied by Lewis \cite{Lew32} and Papapetrou \cite{Pap53}.
The KN electromagnetic field is invariant under the flow of both Killing fields,
and in an asymptotically flat neighborhood of spatial infinity displays,
in leading order, {a structure corresponding to}
an electric monopole and a magnetic dipole {field in overall flat space.}
This region of the KN spacetimes is therefore thought to model the {\em electromagnetic vacuum}
exterior to an axisymmetric, stationarily spinning, charged astrophysical body.

 However, a regular continuation
into the {\em interior} of such an object, for some {\em non-exotic}\footnote{By {\em non-exotic}
matter we mean, to paraphrase H. Bondi (as quoted in \cite{Bic93,Kle03}), matter that ``can be bought
in the  shops.'' Thus, in particular, something infinitely charged or massive, or violating positive
energy conditions, or moving at superluminal speeds with respect to infinity, would be considered exotic.}
matter model, has not been achieved.  In fact it is not known whether this is possible at all.

On the other hand, these stationary ``exterior'' KN spacetimes do have a maximal {\em analytic} extension which
ends at a spacetime-curvature singularity.
Already Newman {\em et al} \cite{NCCEPT65} tentatively identified this singularity as being a
``rotating ring of charge,'' though a conspicuous footnote in the paper \cite{NewJan65}
indicates that they were made aware of the difficulties with this interpretation by the referee.
Carter \cite{Car68} then undertook a thorough investigation of the maximal analytic extension of
the KN spacetimes, which revealed that all metrics in this family are indeed singular on a
cylindrical surface whose cross-section at fixed $t$ is a circle.
However, he also showed that this constant-$t$ ``ring-singularity'' is {\em timelike}.
Worse, he showed that the Killing field $\p/\p \varphi$ generating the $\Sset^1$ orbits becomes
timelike already near the singular cylinder, so that the KN spacetime has a noncausal zone of
{\em closed timelike loops} in the vicinity of its singular cylinder.
For some range of the three KN parameters (usually referred to as charge, mass, and (spin-)angular
momentum of the spacetime) this noncausal zone is hidden behind an event horizon, residing inside a
``black hole'' region of the spacetime, but for another range of parameter values it's not, and
the ring singularity is then naked.

Carter's thorough study \cite{Car68} of the singularities of the maximal analytic
extension of the KN spacetimes seems to rule out the possibility of referring to the singular set of these
spacetimes in any conventional way as a ``stationary, spinning charge distribution.'' Curiously however, it has not put to
rest the quest for such a source, see \cite{Isr70,Tio73,Lop83, PekFra87,Kai04,Lyn04}.

In this paper we present a different angle of attack on this old ``problem of the sources for KN" by studying
what is quite possibly {\em the only situation} in which talking about ``stationarily spinning sources"
for a KN-type manifold is feasible,
namely, we discuss {a} {\em  zero-gravity limit} of the KN spacetimes.
We find that in this limit a flat, but topologically nontrivial, spacetime $\M_0$ emerges which consists
of two copies of Minkowski spacetime cross-linked at a timelike ``solid'' cylinder, a 2-disk $\times$ $\RR$.  

Such two-sheeted spacetimes where first discovered by D. M. Zipoy \cite{Zip66}, who found a two-parameter 
family of static axisymmetric solutions to Einstein's vacuum equations with this {topology}. 
The spacetime $\M_0$ that we study in this paper is {also a $G\to0$ limit of}
the Zipoy family.

The same spacetime manifold can also be obtained by taking  {a} limit
in which both the charge and the mass of the KN solution vanish, as was already contemplated by Carter \cite{Car68}.
However, in contrast with Carter's limit, in our limit the electromagnetic fields of the KN spacetime
converge to {\em nontrivial} solutions of
the familiar vacuum Maxwell equations  on this flat, topologically nontrivial background $\M_0$.
We will show that these fields can be interpreted as stationary solutions of  {the} 
Maxwell equations with
singular sources prescribed on a time-slice of $\M_0$. 
The crucial difference between our approach and the previous  
attempts to identify singular sources for KN fields is thus threefold:
\begin{enumerate}
\item
Our sources (charge and current) are supported {\em only} on the ring.
\item The ring singularity is {\em spacelike} in the limit $G\to 0$, thus making it possible to interpret the source as a stationary, spinning charge distribution.
\item Even though the underlying spacetime is flat, it is topologically nontrivial{:} it is {\em double-sheeted}, and the electromagnetic KN potentials are {\em anti-symmetric} functions defined on it.
\end{enumerate}
In particular, the third observation in the above, and thus the connection with Zipoy {topology}, 
is lacking in all previous treatments  of the zero-gravity limit of KN spacetimes (e.g. \cite{Car68,New02,GaiLyn07}) 
and the KN electromagnetic fields (e.g. \cite{Isr70,Tio73,Lop83, PekFra87,Kai04,Lyn04a,Lyn04}) that we are aware of.

The rest of this paper is structured as follows:

In the next section we collect the mathematical preliminaries needed for our discussion of
the KN spacetimes;
we first recall the conventional Einstein--Maxwell\footnote{One should append
a second ``Maxwell"  to the name of these equations to emphasize that Maxwell's vacuum law $D=E$, $B=H$
is being used to close the system of general Maxwell's equations for the electromagnetic fields.
This proves helpful if one needs to distinguish between different electromagnetic vacuum laws; 
(see e.g. \cite{Tah11}). }
system of equations for an electromagnetic spacetime $(\M,\bg,\bF)$; then we summarize the Ernst reduction
of these equations for the  special case that the spacetime possesses two commuting Killing fields; lastly we present the formulas
for the KN spacetimes and their electromagnetic fields.
Then in section 3,  we introduce our $G\to0$ limit of the KN electromagnetic spacetimes and raise the
question of the sources of their electromagnetic fields. In section 4 we present an ab-initio derivation of the solution of
the linear vacuum Maxwell equations on the topologically nontrivial $G\to0$ spacetime with the appropriate asymptotic
conditions at spatial infinity and at the ring singularity, which leads to the proper identification of the sources.
In section 5 we conclude our paper with a summary of our results and an outlook on the content of a companion paper \cite{KieTahIa}, where we  study the Dirac equation on the topologically nontrivial
zGKN spacetime.


\section{The KN family of electromagnetic spacetimes}
\subsection{Einstein--Maxwell equations}
We recall that an electromagnetic spacetime $(\cM,\bg,\bF)$ is a
solution to the Einstein--Maxwell system of equations:
\bna
R_{\mu\nu}[\bg] - {\ts\frac{1}{2}} R[\bg] g_{\mu\nu} & = & {\ts\frac{8\pi G}{c^4}} T_{\mu\nu}[\bg,\bF]\label{eq:Ein}\\
 \nab^\mu F_{\mu\nu} & = & 0.\label{eq:Max}
 \ena
 Here $R_{\mu\nu}$ denotes the Ricci curvature tensor and $R$ the scalar curvature of the metric $g$, $G$ is Newton's constant of universal gravitation, and $c$ is the {\em chronometric constant},\footnote{Commonly referred to as ``the speed of light in vacuum."  We find this terminology slightly problematic since the constant $c$ is present also in Einstein's {\em vacuum} equations, and without electromagnetism there is no ``light'' to speak of.} {relating time and space units}.   We will choose units in which $c=1$.
Finally, $T_{\mu\nu}$ is the energy(-density)-momentum(-density)-stress tensor of the electromagnetic field.  In the Maxwell--Maxwell's case, one has:
 \beq
 T_{\mu\nu} = {\ts\frac{1}{4\pi}}\left(F_{\mu}^\la {}*\!F_{\nu\la} -{\ts\frac{1}{4}} g_{\mu\nu} F_{\al\beta} F^{\al\beta}\right).
 \eeq
In particular, since the above energy tensor is trace-free, it is possible to omit the term proportional to scalar curvature $R$ in \refeq{eq:Ein} since $R$ will be zero.   Note that any solution $(\cM,\bg,\bF)$ to the above system will in particular depend on the value of $G$.



\subsection{Spacetimes with two commuting Killing fields and the Ernst reduction}
\subsubsection{Stationary solutions}
A solution triple $(\cM,\bg,\bF)$ is called {\em stationary} if there exists a Killing vectorfield $T$ for $(\cM,\bg)$ whose orbits are complete, diffeomorphic to $\RR$, and everywhere {\em timelike} in $\cM$, and such that its action leaves $\bF$ invariant, that is to say
$$ V := - \bg(T,T)>0,\qquad \cL_T \bg = 0,\qquad \cL_T \bF = 0.$$
 Since the electromagnetic field is assumed to be invariant under the action of $T$, one has
$$ d(i_T \bF) = - i_T d\bF = 0,\qquad d(i_T *\bF) = -i_T d*\bF = 0$$
where we have assumed that the component of Maxwell's equations \eqref{eq:Max} along $T$ is satisfied.  Let $\cU$ be a simply connected domain in $\cM$.  By Poincar\'e's lemma there are functions $\phi,\psi$ defined on $\cU$ such that
 $$ i_T \bF = d\phi,\qquad i_T *\bF = d\psi.$$
$\phi$ is called the {\em electric potential} and $\psi$  the {\em magnetic stream function}.
 Furthermore, let the one-forms $\be, \bb, \boldsymbol{\Ga}$ be defined as follows:
 $$ \be := i_T dT^\flat,\qquad \bb := i_T*dT^\flat,\qquad \boldsymbol{\Ga}:= \be+i\bb.$$
 In particular $\bb = *(T^\flat\wedge dT^\flat)$ is the {\em twist form} of $T$.  It is easy to see that $\be = dV$.  On the other hand, $\bb$ is not a closed 1-form, but it turns out that the contraction along $T$ of Einstein's equations \eqref{eq:Ein}, i.e. the equations $i_T \bR-\half R T^\flat = 2G i_T\bT$, impl{ies} that $$ d(\bb + \ka \boldsymbol{\xi}) = 0,$$
 where $$\ka :=2G$$  and $$ \boldsymbol{\xi} := \phi d\psi - \psi d\phi.$$
 Thus there is a function $Y$ on $\cU$ such that $\bb = dY - \ka \boldsymbol{\xi}.$
 It then follows that
 $$ \boldsymbol{\Ga} = dV + i dY - i \ka (\phi d\psi - \psi d\phi) = d\left(V + \frac{\ka}{2}(\phi^2+\psi^2 ) + i Y \right) - \ka (\phi -i\psi)d(\phi+i\psi).$$
 Therefore defining the following pair of complex-valued potentials (called {\em Ernst Potentials} after their discoverer F. J. Ernst \cite{Ernst}):
 $$ \cE := V + \frac{\ka}{2}|\Phi|^2 + iY,\qquad \cF := \phi+i\psi,$$
 one obtains that $ \boldsymbol{\Ga} = d\cE - \kappa \bar{\cF}d\cF.$
 One also notes that on $\cU$, $\mbox{Re} \cE - \frac{\ka}{2} |\cF|^2 = V \geq 0.$  Therefore the Ernst potentials take their values in a {\em Siegel domain} $\fS$ in $\Cset^2$:
 $$ \fS := \{ (\cE,\cF)\in \Cset^2\ |\ \mbox{Re}\cE \geq\frac{\ka}{2} |\Phi|^2 \}.$$

 Let $(\cN,\check{\bg})$ be the (3-dimensional,  {Riemannian}) quotient manifold of $(\cM,\bg)$ under the $\RR$-action  generated by $T$. $\cM$ can be viewed as a bundle over $\cN$, with projection $\pi:\cM\to \cN$ taking a point in $\cM$ to its orbit under the $\RR$ action, viewed as a point in $\cN$. It is always possible to introduce coordinates on $\cM$ that are adapted to the Killing field $T$, in such a way that the line element of $\bg$ reads
   \beq\label{lineg0} ds_\bg^2 = -V(dt +\boldsymbol{\alpha})^2 + \frac{1}{V} \ga_{ij} dx^i dx^j.\eeq  Here $t$ is a Killing parameter such that  $T = \frac{\p}{\p t},$ $(x^i)$ is an arbitrary coordinate system on $\cN$, $\boldsymbol{\al} = \frac{-1}{V} g_{0i} dx^i$ is a 1-form on $\cN$, $\check{g}_{ij} = g_{ij} +V\al_i\al_j$ is the metric induced by the projection on $\cN$, and $\boldsymbol{\ga}$ is a metric on $\cN$ which is conformal to $\check{\bg}$: $ \boldsymbol{\ga} := V \check{\bg}.$
\subsubsection{Stationary axisymmetric solutions}
Let us now assume that the original metric $\bg$ has another continuous symmetry, generated by another Killing field which is assumed to be {\em spacelike}  in $\cM$, with orbits of all points not fixed by its action being diffeomorphic to $\Sset^1$.  Let $\cG := \Rset \times SO(2)$ denote the full symmetry group.  The ``axis'' is now defined as the set of points $p$ in $\cM$ whose orbits under $\cG$ are {\em degenerate}, i.e. the isotropy group at $p$ is nontrivial.  We say that $(\cM,\bg,\bF)$ is {\em stationary and axially symmetric} if $\cG$ acts effectively on $(\cM,\bg)$ as an abelian group of isometries leaving $\bF$ invariant, and such that the orbits of points not on the axis are {\em timelike} 2-surfaces (cylinders).  We assume that the axis $\cA$ is nonempty.  Let $K$ be a generator of the abelian $\cG$ symmetry linearly independent from $T$. It can then be shown (e.g. \cite{Wei96}) that the distribution of the 2-planes which are the orthogonal complements of $\mbox{span}(K,T)$ in the tangent space at each point in the manifold is integrable, i.e. the two twist constants vanish. It then follows that the 2-dimensional quotient manifold $(\cQ,\tilde{\bg}) := (\cM,\bg)/\cG$  can be identified with a surface in $\cM$, which will have a nonempty boundary corresponding to the axis. We can once more take coordinates $t,\varphi$ on $\cM$ that are adapted to the Killing fields, i.e. $K = \p_\varphi$ and $T = \p_t$. The line element of $\bg$ then reads
\beq\label{lineg1} ds_\bg^2 = X d\varphi^2 + 2W d\varphi dt - V dt^2 + \tilde{g}_{ab}dy^a dy^b.\eeq
where $$X := \bg(K,K),\quad W := \bg(K,T),\quad V := - \bg(T,T).$$
The line element can also be written as follows
\beq\label{lineg2}
ds_\bg^2 = -V(dt +\mho d\varphi)^2 +\frac{\rho^2}{V} d\varphi^2 + \frac{1}{V} m_{ab} dy^a dy^b.
\eeq
Here
$$ \mho := \frac{W}{V},\quad \rho := \sqrt{W^2 +XV}, \quad m_{ab} :=V \tilde{g}_{ab}.$$
In particular $\rho d\varphi\wedge dt$ is the area element of the cylindrical group orbits (which are assumed to be timelike, hence $W^2+XV>0$).  Comparing \refeq{lineg2} with \refeq{lineg0} we obtain that $\boldsymbol{\alpha} = \mho d\varphi$, and also that
$$ ds_{\boldsymbol{\ga}}^2 = \rho^2 d\varphi^2 + ds_\calm^2,$$
thus $\calm$ is the metric induced on $\cQ$ from $\cN$.  Define the vectorfield
$$ \tilde{K} := K - \mho T$$
which is the orthogonal projection of $K$ onto $\cN$.  We have $\boldsymbol{\ga}(\tilde{K},\tilde{K}) = \rho^2$.  It can be shown (see \cite{Wei90}) that $\tilde{K}$ is a {\em hypersurface{-}orthogonal} Killing field for the metric $\boldsymbol{\ga}$, and thus $(\cQ,\calm)$ is a totally geodesic submanifold of $(\cN,\boldsymbol{\ga})$. Moreover,  $\rho$ as a function defined on $\cQ$ is {\em harmonic} (see \cite{Wei96} for a proof) i.e.
$$ \Del_\bm \rho = 0.$$
Let $z$ denote a conjugate harmonic function for $\rho$.  One can then coordinatize $\cQ$ using $(\rho,z)$.  These are called {\em Weyl coordinates}, they provide a nowadays so-called {\em isothermal}\,\footnote{First introduced by Lam\'e, who called them {\em thermometric parameters}.} system of coordinates for the quotient $\cQ$, i.e.
$$ ds_\bm^2 = e^{2u}(d\rho^2+dz^2)$$ where $u = - \log |d\rho|_\bm$.  Thus, in Weyl coordinates, the line element of a stationary axially symmetric electrovacuum metric can be put in the so-called {\em Lewis-Papapetrou} form
$$
ds_\bg^2 =  -V(dt +\mho d\varphi)^2 +\frac{1}{V}\left(\rho^2 d\varphi^2 + e^{2u}(d\rho^2+dz^2)\right).
$$
\subsubsection{Ernst equations}
Once the Ernst potentials $(\cE,\cF)$ are known, all the remaining unknown metric coefficients can be computed from them using quadratures (see \cite{Wei96} for details).  The potentials  themselves satisfy a system of PDEs known as the {\em Ernst equations} \cite{Ernst}:
\bea
\frac{1}{\rho}\nab\cdot(\rho d\cE) &=& \frac{1}{V}\boldsymbol{\Ga}\cdot d\cE\\
\frac{1}{\rho}\nab\cdot(\rho d\Phi) & = & \frac{1}{V}\boldsymbol{\Ga}\cdot d\Phi.
\eea
Here the covariant derivative and the inner product is with respect to the metric $\bm$ of the quotient manifold $\cQ$. Furthermore, the above can be viewed as equations for an axially symmetric {\em harmonic map} from $\RR^3$ into the Siegel domain $\fS$, which is a model for the complex hyperbolic space $\Hset_\Cset$ (see \cite{Maz83}.)

\subsubsection{Symmetry and gauge invariance}\label{symsec}
The target $\fS$ of the above mentioned harmonic map is a Hermitian symmetric space, $\fS \equiv SU(1,2)/S(U(1)\times U(2))$, on which the 8-dimensional Lie group $SU(1,2)$ acts as a group of continuous isometries consisting of, loosely speaking,  3 ``translations" $\tau_{z,\ga}$, one ``scaling" $\la_\beta$, one ``rotation" $\rho_\al$, and three  ``inverted translations" $\hat{\tau}_{z,\ga}$, with the following explicit actions \cite{Cha83}:
\bea
\tau_{z,\ga}(\cE,\Phi) & := & (\cE + 2z\Phi + |z|^2 + i \ga,\Phi+z),\qquad z\in \Cset,\ga\in \RR \\
\la_\beta (\cE,\Phi) & := & (e^{2\beta}\cE,e^\beta \Phi) ,\qquad \beta \in \RR\\
\rho_\al(\cE,\Phi) & := & (\cE, e^{i\al}\Phi),\qquad \alpha \in \RR \\
\hat{\tau}_{z,\ga}(\cE,\Phi) & := & \mathfrak{i} \circ \tau_{z,\ga} \circ \mathfrak{i} (\cE,\Phi),\qquad \mbox{where } \mathfrak{i} (\cE,\Phi) := (\frac{1}{\cE},\frac{\Phi}{\cE})
\eea
The rotation $\rho_\al$  in particular is what is usually referred to as a {\em duality rotation}.   Since the Ernst equations describe a harmonic map into a symmetric space, they are invariant under the above action of $SU(1,2)$, and thus solutions corresponding to $(\cE,\Phi)$ and $(\cE',\Phi') = g(\cE,\Phi)$ for any fixed $g\in SU(1,2)$ should be considered {\em physically equivalent}. In addition to these continuous isometries, there is also a purely discrete one, namely complex conjugation  $\mathfrak{c}(\cE,\Phi) = (\bar{\cE},\bar{\Phi})$, with a similar requirement for physical equivalence.  Note that $\mathfrak{c}^2 = id$.  Now, from the above description of the action of the group $SU(1,2)$ on the Ernst pair of potentials $(\cE,\Phi)$ it is clear that there is only one non-identity element $g\in SU(1,2)$ with an {\em involutive} action, i.e. with $g^2 = id$, namely $g=\rho_\pi$. 
 
 Consider an involutive isometry $\tau$ of the extended manifold, e.g. a reflection $\tau:\tilde{\cM} \to \tilde{\cM}$, such that $\tau^2 =id$.
For a potential $\Phi$ originally defined on $\cM$ to have an extension to $\tilde{\cM}$ it would then be necessary that $\Phi \circ \tau$ is equivalent to $\Phi$, i.e. there should exist a $ g \in \mbox{span}(SU(1,2) \cup \{\mathfrak{c}\})$ such that $\Phi \circ \tau = g \Phi$.  But $\tau^2 = id$ and thus $g^2 = id$, so that $g \in \Zset_2 \times \Zset_2$, generated by $\{\mathfrak{c},\rho_\pi\}$.  It follows that under such involutive isometries, $\Phi$ can at most change  by a sign, or by conjugation, or both.

\subsubsection{Brief history of multisheeted spacetimes}
A stationary spacetime is {\em static} if the timelike Killing field $T$ is hypersurface-orthogonal, i.e. $\bb \equiv 0$.  Static, axisymmetric solutions of the Einstein Vacuum or Einstein--Maxwell equations are known as {\em Weyl Solutions} \cite{Wey17}.  In the vacuum case, by setting $V = e^{2v}$ in the Ernst equation one finds that $v$ is an axisymmetric {\em harmonic function} on $\RR^3$.  If the spacetime is asymptotically flat, $v$ must vanish at infinity, and therefore needs to blow up somewhere in the interior.  The blow-up set of $v$ may either be a horizon or an actual singularity.  Since $v$ is axisymmetric, the simplest possibilities for the blow-up set are a point or an interval on the axis of symmetry, or a circle in a plane orthogonal to the axis.  Well-known members of the Weyl class of solutions are the Curzon solution (singular set is one point), the Schwarzschild (singular set is an interval, corresponding to the event horizon) and the Bach-Weyl solution \cite{BacWey22}, which has a ring singularity (but no exotic topology).  In 1964 David Zipoy found another family of Weyl solutions by solving Laplace's equation in spheroidal coordinates, and noticed that his solution can be extended to a double-sheeted spacetime by allowing the spheroidal coordinate $r$ to take on negative values, thus discovering the first multi-sheeted solution of Einstein's equations.  The Kerr solution, which is not a member of the Weyl class, and which had just been discovered in 1963,  was not known to have such properties until its maximal analytical extension was given by Boyer and Lindquist in 1967, using spheroidal coordinates.  Meanwhile the Kerr--Newman solution was discovered in 1965, and the multi-sheetedness of its maximal analytical extension was revealed by Carter in 1968.  
\subsection{The KN metric}
The {\em KN family of spacetimes} is a three-parameter family of stationary axially symmetric solutions to (\ref{eq:Ein}-\ref{eq:Max}).  The 3 parameters are commonly called 
``charge" $\sq$,  {ADM} ``mass" $\sm$ and  {ADM} ``angular momentum per unit mass" $a$ of the spacetime\footnote{Even though these terms classically are associated with matter, and these spacetimes are vacuum, i.e devoid of matter. ADM stands for Arnowitt, Deser, and Misner \cite{ADM61} who formulated these quantities as surface integrals on the sphere at infinity of a spatial slice}.  The causal structure of these spacetimes depends crucially on whether the quantity
$$\scp^2 := a^2+\ka\sq^2-\ka^2\sm^2$$
is negative (called the subextremal case), zero (extremal), or positive (hyperextremal).  Since our goal is to study the limiting case $\ka\to 0$, we are thus confined to the hyperextremal case, where $\scp$ is real, and without loss of generality positive.

Let $\cQ$ be a half-plane, with Cartesian coordinates $(\rho,z)$, $\rho \geq 0$.  Let $(r,\theta)$ be the following system of implicitly defined elliptical coordinates on $\cQ$:
    \beq\label{coords}    \rho = \sqrt{(r-\ka\sm)^2 +\scp^2}\ \sin\theta,\qquad z = (r-\ka\sm)\cos\theta,
    \eeq
  Thus the level sets of $r$ are confocal ellipses that shrink to the line segment $[-\scp,\scp]$ on the $\rho$ axis, and the level sets of $\theta$ are the  family of hyperbolas orthogonal to those ellipses. The coordinates $r$ and $\theta$, together with the azimuthal angle $\varphi$ (the Killing parameter for $K$) form a system of {\em oblate} spheroidal coordinates\footnote{A similar construction in the {\em subextremal} case gives rise to a system of {\em prolate} spheroidal coordinates covering the exterior patch.} for the manifold $\cN$.  There is a coordinate singularity at $r = \ka \sm$ corresponding to the ellipsoids degenerating into a line segment.
    The axis of rotation is where $\sin\theta = 0$.
    Let us define the auxiliary quantities
    $$ \Delta := r^2 - 2\ka\sm r +\ka \sq^2 + a^2 = (r-\ka\sm)^2 + \scp^2 >0,\quad \Sigma := r^2 + a^2 \cos^2 \theta \geq 0.$$
    For the KN family, the metric of the quotient space is
$$ds_\bm^2 = \frac{V\Sigma}{\Delta}( dr^2 + \Delta d\theta^2),$$
 and the coefficients of the metric of the group orbits are
 \beq\label{XVW}
    X = \frac{(r^2 + a^2)^2 - \Delta a^2 \sin^2\theta}{\Sigma}\sin^2\theta,\quad V = \frac{\Delta-a^2\sin^2\theta}{\Sigma},\quad W = \frac{a\sin^2\theta(2\ka\sm r - \ka\sq^2)}{\Sigma}
    \eeq
so that the area element of the group orbits comes out to be
$$\sqrt{XV+W^2}d\varphi\wedge dt = \sqrt{\Delta}\sin\theta d\varphi\wedge dt = \rho d\varphi\wedge dt$$
just as it should.    The Ernst potentials  for KN are
\bna\label{KN}
\cF &=& \frac{\sq}{r - i a \cos\theta} =\frac{\sq  r}{\Sigma} + i  \frac{\sq a\cos\theta}{\Sigma} \\
\cE &=& V - \frac{\ka}{2} |\cF|^2 + iY = 1 - \frac{2\ka\sm}{r - i a \cos\theta} = 1 - \frac{2\ka\sm r}{\Sigma} - i \frac{2\ka\sm a\cos\theta}{\Sigma}.
\ena
Note that the image of $\cQ$ under the Ernst map $(r,\theta) \mapsto (\cE,\cF)$ is contained in a complex line in the Siegel domain: $\cE + \frac{2\ka\sm}{\sq} \cF = 1$.   This is but one indication among many that the KN solution is algebraically special.

All KN spacetimes have a discrete symmetry, namely the metric is invariant under reflection with respect to the equatorial plane $\theta=\pi/2$.  In other words $(t,r,\theta,\varphi)\mapsto (t,r,\pi-\theta,\varphi)$ leaves the metric unchanged.  Note that the electric potential is symmetric while the magnetic stream function is anti-symmetric with respect to this reflection.

The great advantage of using oblate spheroidal coordinates\footnote{In general relativity context often referred to as {\em Boyer--Lindquist} coordinates in recognition of the researchers who used them to find the maximal analytical extension of the Kerr solution \cite{BoyLin67}.}, apart from the obvious simplicity of the formulas for the metric coefficients and electromagnetic potentials, is that they lead to an  {\em extension} of the  KN manifold, upon noticing that the coordinate $r$ in the above is allowed to be negative\footnote{This property of spheroidal coordinates seems to have first been observed and remarked on by Zipoy \cite{Zip66}.}.  In fact $(r,\theta) \mapsto (2\ka\sm - r,\pi -\theta)$ leaves $(\rho,z)$ unchanged\footnote{For $\ka\sm\ne 0$, this is {\em not} a discrete symmetry of the spacetime, since there is a singularity at $r=0$ but none at $r=2\ka\sm$.}.  Thus every point $(\rho,z)$ in the quotient $\cQ$, with the exception of $(\scp,0)$, corresponds to {\em two} points in an extended version of the quotient, namely $(r,\theta)$ and $(2\ka\sm - r,\pi-\theta)$.  In the following we will show that these spheroidal coordinates are in fact coordinates on the {\em maximal extension} of the spacetime manifold.

\subsection{The topology of KN spacetimes}
As it's clear from \refeq{XVW}, in coordinates in which the metric is given, there is a singularity at $\Sigma =0$, i.e. at $r=0, \theta = \pi/2$, for all $t$ and $\varphi$.  Let $$\cR := \{(t,r,\theta,\varphi)\ | \  \Sigma(r,\theta) = 0 \}$$ and let $\cR_t$ denote the $t$-slice of this surface. $\cR$ is diffeomorphic to a cylinder, and $\cR_t$ to a circle (a {\em ring.})  It is not hard to check, for example by calculating curvature invariants of the metric, that the surface $\cR$ is a true singularity for the metric, not just a coordinate one.  It was shown by Carter \cite{Car68} that all geodesics in this spacetime are complete, except for those that run into $\cR$.  He also showed that, in the hyperextremal case (which is the case under study in this paper,) the single coordinate patch $(t,r,\theta,\varphi) \in \RR\times \RR \times [0,\pi) \times [0,2\pi)$ already covers the maximal analytical extension $\tilde{\cM}$ of the spacetime manifold, which according to Carter is  simply connected as a topological manifold.
     
The presence of the singular surface $\cR$  however means that in order to consider the spacetime as a {\em Lorentzian} manifold, this singularity needs to be excised, i.e. let $\cM^* := \tilde{\cM}\setminus \cR$.  The resulting manifold $\cM^*$ is then certainly non-simply connected:  at any fixed $t$ there are loops that are not homotopic to zero since they ``thread" the excised ring $\cR_t$.

Construction of a two-sheeted extension for $\cM^*$ is carried out in \cite{HawEll73}, but the main idea is already present in \cite{Zip66}:
Such a 2-sheeted extension is first constructed for $\cN^*$.  At the singularity, we have $\rho = \rho_0 := \sqrt{a^2+\ka \sq^2}$ and $z = 0$.  $\rho_0$ is often referred to as the (Euclidean) ``radius" of the ring singularity.  Note that $\rho_0 \to |a|$ as $\ka \to 0$.  The extension for $\cN^*$, which will be denoted by $\fN$, consists of two copies of $\RR^3$ with Cartesian coordinates $(x_1,x_2,x_3)$ and $(x'_1,x'_2,x'_3)$; and the map $\Pi:\fN \to \cN^*$ identifies the disks $\cD = \{ x_1^2 + x_2^2 < \rho_0^2, x_3 = 0\}$ with the corresponding disk $\cD'$ in the second copy in such a way that the top of $\cD$ is identified with the bottom of $\cD'$, and vice-versa.  If we endow each copy of $\RR^3$ with oblate spheroidal coordinates in such a way that $r\geq \ka \sm$ corresponds to one copy and $r\leq \ka \sm$ to the other, then the map $\Pi$ is given simply by \eqref{coords}, i.e. $\Pi(r,\theta,\varphi) = (\rho(r,\theta),z(r,\theta),\varphi)$. The {\em involution} $\tau$ is then defined to be $\tau(r,\theta,\varphi) = (2\ka\sm - r, \pi-\theta,\varphi)$.  Clearly $\tau^2 = id$. Taking the Cartesian product of $\fN$ with $\RR$ will then provide an extension for $\cM^*$, which turns out to be the maximal analytical extension of $\cM^*$.  
We will denote it by $\fM$.

\subsection{Sommerfeld spaces}
The manifold $\fN$, which can be identified with a constant $t$ slice of the spacetime $\fM$ defined above, is the simplest nontrivial example of a {\em Sommerfeld space}.  

A Sommerfeld space by definition consists of $m$ copies of $\RR^3$ (called {\em sheets} or {\em leaves}) that are glued together along $n$ two-sided surfaces differomorphic to the 2-disk, called {\em branch cut surfaces}, that span $k$ distinct non-linked simple closed curves, called {\em branch curves}.  See \cite{Eva51} for the precise definition.  Such a space is implicitly present in Sommerfeld's seminal paper \cite{Som97} on multi-valued harmonic functions (called ``branched potentials" by Sommerfeld).  His main idea was to generalize the concept of a Riemann surface to higher-dimensional manifolds, which Sommerfeld termed {\em branched Riemann spaces}, by looking for a three-dimensional space on which a given multi-valued harmonic function becomes single-valued.  Sommerfeld's interest in these spaces grew out of his study of the diffraction problem \cite{Som96}.   The concept of a branched space was subsequently generalized by G. C. Evans \cite{Eva51}.  

Apparently unaware of these developments, Zipoy was led to the same concept by noticing the behavior of spheroidal coordinate functions used to describe his newly-discovered static axisymmetric family of solutions to Einstein's vacuum equations \cite{Zip66}.  Zipoy appears to have been  the first person in the GR community to notice the ``duodromic" nature of space in the vicinity of a branch curve, i.e. that a circular path around a point on the curve would have to take two complete turns before it can close {(for which reason we will speak of a spacetime with this property  as having a {\em Zipoy topology})}.  
\subsection{The causal geometry of KN spacetimes}
The causal structure of KN spacetimes in all three regimes of the parameters was thoroughly investigated by Carter \cite{Car68}.  The following conclusions are already evident from \refeq{XVW}:  In the hyperextreme case the Killing field $T$ remains timelike throughout the manifold, i.e. $V>0$.  On the other hand, the rotation Killing field $K$ undergoes a change of type.  It is spacelike for large positive and negative $r$, while close to the ring singularity, i.e. for $r$ small, it becomes {\em timelike}.  Let
$$\cT = \{(r,\theta,\varphi)\ |\ K\ne 0, \bg(K,K)= 0\}\subset \cN^* $$
denote the surface on which $K$ is null.  $\cT$ is diffeomorphic to a 2-torus  whose {\em soul} is the ring singularity $\cR$.  Inside $\cT$, the Killing field $K$ is timelike, hence its orbits are closed timelike curves, which violates causality. On the other hand, we have that the determinant of the metric induced on the group orbits $-VX-W^2 =  -\rho^2 <0$ and therefore the group orbits are {\em always} timelike cylinders, which is another way of saying that there are no horizons in this spacetime.   Carter showed \cite{Car68} that  the existence of closed timelike loops, combined with the absence of horizons, imply that  the entire spacetime manifold is a causally vicious set, i.e. any two points in it can be connected by both a future and a past-directed timelike curve!
 For this reason we call $\cT$ the {\em causality limit surface}.
\section{The $G\to 0$ limit of KN spacetimes}
For a fixed value of the coupling constant $\ka = 2 G$, the KN solution $(\cM^*,\bg,\bF)$ is a 3-parameter family, indexed by mass $\sm$, charge $\sq$ and angular momentum per unit mass $a$.  Fixing these parameters (as well as  $c$, which we have fixed to be one), and taking the formal limit of the solution {represented in Boyer--Lindquist coordinates}
 as $G \to 0$ (or equivalently $\ka \to 0$), one arrives as the triple $(\cM_0^*,\bg_0,\bF_0)$.  Below we study the properties of this limiting solution.
\subsection{Topology and geometry of the limiting solution}
It is clear that in {this} limit $\ka\to 0$, the topology of $\cM^*$ remains nontrivial, since the singular surface $\cR$ does not disappear in this limit (i.e. the ``radius of the ring" $\rho_0 \to |a| \ne 0$ as $\ka \to 0$).  Thus $\zM^*$ is still not simply connected.  The limiting
$\fM_0$  provides an extension for $\zM^*$.  Let $\cN_0^*$ denote the $t=0$ slice of $\cM_0^*$.   In the next subsection we will give an alternative construction of an extension for $\cN_0^*$, using {\em only one copy} of $\RR^3$, which will help in visualizing the limiting electromagnetic field $\bF_0$.

But first we observe that  $\zM^*$ is flat:  Taking the limit $\ka \to 0$ of the KN metric  {represented in Boyer--Lindquist coordinates}, we obtain
$$
ds_{\zg}^2 = -dt^2 + (r^2+a^2)\sin^2\theta d\varphi^2 + \frac{r^2+a^2\cos^2\theta}{r^2+a^2} \left(dr^2+ (r^2+a^2) d\theta^2\right).
$$
Recalling how the Weyl coordinates $(\rho,z)$ are related to $(r,\theta)$:
$$\rho = \sqrt{r^2+a^2}\sin\theta,\qquad z = r \cos \theta,$$
we may thus compute that
$$
ds_{\zg}^2 = -dt^2 + dz^2 + d\rho^2 + \rho^2 d\varphi^2,
$$
which is the Minkowski metric.  

Note that since the above coordinate transformation is singular at the ring $\cR_0 = \{(r,\theta,\varphi)\ |\ r=0,\theta=\frac{\pi}{2}\}$, this calculation does not tell us anything about whether the ring is still a singularity of the limiting metric $\zg$.  The fact that it still is can be established in the following way.  For $\ep>0$ small  let $\ga:[0,2\pi] \to \fM_0$ denote the small ``circle" lying in a meridional plane (which, without loss of generality we can take to be the plane $\varphi = 0$)  and centered at a point $(0,\frac{\pi}{2},0)$ on the ring:
$$
\ga(\al) = (r(\al),\theta(\al),0)\mbox{ with }\ r=\ep a \cos\al,\ \theta=\ep \sin\al + \frac{\pi}{2}.
$$
 The circumference of this circle is
$$ \int_0^{2\pi} \sqrt{\bg_0(\dot{\ga},\dot{\ga})} d\al = 2\pi a\ep^2 + O(\ep^3)$$
while the distance of a point on the circle from its center is
$$ \sqrt{(\rho-\rho_0)^2 + z^2} = \frac{a}{2}\ep^2 + O(\ep^3).$$
Thus the limit of the ratio of circumference to radius of the circle $\ga$ as it shrinks to a point is equal to $4\pi$, indicating that every point on the ring is a {\em conical singularity} for the metric $\zg$.

We also note that as $\ka\to 0$ the causality limit surface $\cT$ shrinks down to its soul and disappears in the limit, since $X\to (r^2+a^2)\sin^2\theta $ as $\ka \to 0$ and in particular $X\to a^2>0$ on the ring. Thus, the singular ring is {\em spacelike}.  There are no other singularities or pathologies remaining in $\fM_0$.

The  factor $4\pi$ obtained above also establishes that $\fM_0$ is still ``double-sheeted" just like $\fM$ was. The involution is now simply $$\tau_0(r,\theta,\varphi) = (-r,\pi-\theta,\varphi).$$
 Note that the metric is now invariant under this transformation, so that this is a {\em new} discrete symmetry of the spacetime.  The electromagnetic KN potentials are {\em anti-symmetric} with respect to $\tau_0$.  This shows that in the study of the KN electromagnetic fields, one cannot confine one's attention just to one of the two sheets of the extended manifold $\fM_0$. It is necessary to view the fields as defined on the whole extended manifold.

\subsection{An alternative system of coordinates for $\fM_0$}
An extension $\fN'_0$ for $\cN_0^*$, the $t=0$ slice of $\cM_0^*$, can be constructed using only one copy of $\RR^3$  in the following way:  For $a \in \RR$ consider the Riemann surface
$$ \cS := \{ (Z,W)\in \Cset^2\ |\ W^2 - 2WZ + a^2 = 0\}.$$
This is a non-singular, two-sheeted surface with branch points at $Z = \pm a$, consisting of two copies of $\Cset$ which are slit  at $\{ W=u+iv \in \Cset\ |\ -a\leq u \leq a,\ v = 0\}$ and glued together in the usual manner, with the top lip of the slit in one copy glued to the bottom lip in the other copy.  Since $\cS$ is a {\em parabolic} Riemann surface, its universal cover is still $\Cset$, and its compactified cover is the Riemann sphere $\bar{\Cset}$.  The  projection $\pi:\Cset \to\cS$ is given by the 2-to-1 map $W \mapsto Z = \pi(W)=\frac{W^2+a^2}{2W}$.   Let $Z=\rho+iz$ and let $(r,\theta)$ denote the following elliptical coordinates in the $Z$ plane:
$$ \left\{\begin{array}{rcl} \rho & = & \sqrt{r^2+a^2} \sin\theta \\ z & = & r \cos\theta\end{array}\right.$$
Also let $(\tilde{r},\tilde{\theta})$  denote {\em  polar coordinates} in the $W$ plane, i.e. $W=\tilde{r}e^{i\tilde{\theta}}$.  From the expression for $\pi$ it thus follows that
$\theta = \tilde{\theta}$ and \beq\label{req} r =\frac{\tilde{r}^2 - a^2}{2\tilde{r}}.\eeq

The mapping $\pi$ is extended to $\RR^3$ by making it {\em equivariant} with respect to rotations about the $z$-axis, i.e., letting
$(\theta,\varphi)$ denote the usual {\em spherical} coordinates on $\Sset^2$, with $\theta$ the polar and $\varphi$ the azimuthal angle,  the mapping is simply
$$\bar{\pi}(\tilde{r},\theta,\varphi) = (\frac{\tilde{r}^2+a^2}{2\tilde{r}}\sin\theta,\frac{\tilde{r}^2-a^2}{2\tilde{r}}\cos\theta,\varphi).$$
 It is easy to check that the range of $\bar{\pi}$ is $\cN_0^*$, the $\ka\to 0$ limit of the 3-dimensional manifold $\cN^*$ in the previous section, formed by gluing two copies of $\RR^3$ along a 2-disk. We denote the domain of $\bar{\pi}$ by $\fN'_0$. It is homeomorphic to (a single copy of) $\RR^3$. Finally, the map $\bar{\pi}$ is extended with respect to $t$ in the obvious way, to arrive at $\hat{\pi}(t,\tilde{r},\theta,\varphi) = (t,r,\theta,\varphi)$.  The target of $\hat{\pi}$ is now the limiting spacetime $\zM^*$.  The domain of $\hat{\pi}$ is the extended spacetime $\zfM'$ we were looking for. Topologically, it is $\RR^4$.

We can check that under $\hat{\pi}$, the top and bottom hemispheres of the sphere $\tilde{r} = a$ are mapped to the two disks $\cD$ and $\cD'$ in the extended manifold $\cN_0^*$, the equator of the sphere is mapped to the ring $\cR_0$, while the interior and the exterior of this sphere each get mapped to one of the sheets in $\N_0^*$.  The involutive transformation between the sheets is now given by the {\em inversion} with respect to the sphere of radius $a$:
$$
\tau_1(\tilde{r},\theta,\varphi) = (\frac{a^2}{\tilde{r}},\pi-\theta,\varphi).$$
The advantage of this realization of the extension is that the ``gluing" is automatic, meaning that the two sheets are glued exactly where they come in contact with each other, i.e. on the surface of the sphere. Note  that the  ring $r=0,\theta=\frac{\pi}{2}$ is fixed by $\tau_1$.

The pullback under $\hat{\pi}$ of the metric of $\zM$ is denoted by $\hat{\bg} := \hat{\pi}^*\bg_0$, and we compute its line element to be
$$
ds_{\hat{\bg}}^2 = -dt^2 + \frac{(\tilde{r}^2-a^2)^2+4a^2\tilde{r}^2\cos^2\theta}{4\tilde{r}^4} (d\tilde{r}^2 + \tilde{r}^2 d\theta^2) +\frac{(\tilde{r}^2+a^2)^2 \sin^2\theta}{4\tilde{r}^2} d\varphi^2.
$$
$\hat{\bg}$, being the pullback of a flat metric, is flat. Note however that its line element does not coincide with that of the Minkowski metric in spherical coordinates, that is to say, $(\tilde{r},\theta,\varphi)$ are {\em not} standard spherical coordinates (in particular $\tilde{r}/r \sim 2$ for large $r$).  
\subsection{The KN electromagnetic fields and the ``problem of sources"}
It is a remarkable feature of the KN solution (already noticed in \cite{Tio73}) that the electromagnetic potentials $(\phi,\psi)$
for this solution do not depend explicitly on the coupling constant $\ka$.  They read\footnote{Incidentally, these were originally discovered by P. Appell \cite{Appell}
as real and imaginary parts of his complexification of the electrostatic
Coulomb potential. Appell did not notice, though, that they live on a
double-sheeted space.}:
$$\phi = \frac{\sq  r}{\Sigma},\qquad \psi = \frac{\sq a\cos\theta}{\Sigma}.$$
 It thus follows that, as $G\to 0$ these same expressions continue to be solutions of Maxwell's equations, on the background $\zM$, which as we have shown is flat, but with a nontrivial topology.  This observation (that the fields are independent of $G$) may be used to ``derive" the KN fields from classical flat-space electromagnetism: one first guesses what the right sources should be, then solves the linear Maxwell equations with those sources to find the corresponding stationary solution, which one then checks is identical to the KN fields in some coordinates.  This is the approach in \cite{Tio73} for example. The question to ask therefore, is what kind of ``sources", i.e. what arrangement of charge and current distributions, can possibly give rise to the electromagnetic field associated with the KN potentials?  Since these potentials are singular {\em only} where $\Si = 0$, i.e. on the ``ring" $\cR_0$, it stands to reason that one looks for charge/current distributions concentrated only on this ring.

Previous attempts at solving the so called ``problem of sources for KN" however,
suffer from what appears to be a certain lack of enthusiasm for non-trivial topology among the researchers working in this area\footnote{Similar attitudes in the larger GR community may have been responsible for Zipoy not receiving enough credit for his discovery.  The following comment \cite{BonSac68} is perhaps representative of this attitude: ``D. M. Zipoy has presented some static axially symmetric solutions in spheroidal coordinates of Einstein equations for empty space $R_{ik} = 0$.  He endowed them with rather terrifying topological properties."}, as a result of which, the KN fields were invariably forced to reside in Minkowski spacetime, hence causing them to have singularities across an entire disk spanning the ring, (in much the same way that the argument function in one-complex variable theory would appear to have a jump along a ray in the complex plane.)  This neglects the fact that the electromagnetic fields in question are defined on the {\em extended spacetime}, which is double-sheeted, and that indeed they are solutions to Maxwell's equations, but with stationary sources prescribed on a time slice in that manifold.   As a result of this oversight, the sources proposed, e.g. by W.~Israel in \cite{Isr70}, for the KN fields were supported not just on the ring, where the fields are obviously singular, but on the whole disk that spans this ring, where no such singularity is apparent in the fields.  This resulted in extremely exotic and self-contradictory properties for the proposed sources, which were then taken as starting point for later researchers in the field\footnote{In \cite{PekFra87} for example, we find the following recipe proposed for the ``KN source":
\begin{quotation}\small
...a system of currents and of electric surface charges which are distributed over a circular disc of radius $a$, centered at the origin, and oriented normally to the direction of the angular momentum vector.  For a positive total electric charge $e$, the surface charge density is negative inside the disc, becoming infinitely negative as the rim of the disc is approached.  On the rim, there is a positive charge of infinite density which not only neutralizes the negative charge distributed in the interior of the disc, but also leaves a residue of a positive charge equal to $e$.  Similarly, the currents flow in the negative direction inside the disc, with a current that becomes infinitely negative at the rim.  On the rim there is a positive current of infinite intensity, which generates a magnetic moment compensating the negative magnetic moments distributed in the interior, and leaves a net integrated magnetic moment of magnitude $ea$, the latter being equal to the dipole component of the total magnetic moment...
\end{quotation}
It is important to note that, despite its implausibility,  what is being described is precisely the electromagnetic properties of the KN source as proposed first in \cite{Isr70}.  What is not being described in the above paragraph is that, in order for this to be a source, not just for the KN  electromagnetic field but also for the spacetime {\em metric} as well, the above disc has to, in addition, carry a mass  of equally bizarre proportions.}.

Apart from the implausibility of acquiring material for Israel's source ``in the shops", another obvious problem with this recipe is that there is no geometric way to distinguish the ``disc" on which these distributions are to be prescribed; in fact {\em any} 2-surface that spans the ring $\cR_0$ should work equally well.  This is clearly a very good indication that  one has started from the wrong premises.  In the next section we will demonstrate that, at least for the limiting member $(\zM,\zg,\zF)$ of the KN family, it is possible to exhibit a source which is {\em only} supported on the singular ring, a source that lives in {\em both sheets} of the extended manifold $\fM_0$, not just in one, and which easily gives rise to the correct field $\zF$, with no added singularities and/or causal pathologies.  


\section{\emph{Ab initio} derivation of the limiting KN electric and magnetic fields}
\subsection{The equation}
Consider Maxwell's equations (in  the absence of {\em regular} sources) on $\zfM$:  These are
$$ d\bF = 0,\qquad d*\bF = 0$$
where $\bF= d\bA$ is the electromagnetic Faraday tensor, and the equations are understood to hold in the sense of distributions (thereby allowing for {\em singular} sources to be there).   Let $T$ denote a timelike Killing field for $\zfM$ with $\RR$-orbits.  A {\em stationary} field $\bF$ satisfies $\cL_T \bF = 0$.  This gives rise to the existence of two potentials $\phi$ and $\psi$ on $\zfM$ such that
$$ i_T\bF = d\phi,\qquad i_T*\bF = d\psi.$$
Both $\phi$ and $\psi$ satisfy Laplace's equation with respect to the metric $\hat{\bg}$ of $\zfM$:
$$ \Del_{\hat{\bg}} \phi = \Del_{\hat{\bg}}\psi = 0.$$
Suppose now that $K$ is a {\em spacelike} Killing field for $\zfM$ with $\Sset^1$-orbits, that $\bF$ is also invariant under the action of $K$, i.e. $\cL_K \bF = 0$, and that $[T,K] = 0$. Let $t$ and $\varphi$ denote coordinates adapted to these two Killing fields, i.e. $K = \frac{\p}{\p\varphi}$ and $T = \frac{\p}{\p t}$.  The potentials $\phi,\psi$ are therefore independent of $t$ and of $\varphi$.  By the discussion of the previous section, each potential is  a critical point of the Dirichlet energy functional $\hat{E}$:
$$\hat{E}(\hat{f}) = \half \int |\nab \hat{f}|^2 e^{2\la} d^3x
= \pi \int \left( |\hat{f}_{\tilde{r}}|^2 + \frac{1}{\tilde{r}^2} |\hat{f}_\theta|^2 \right) \frac{\tilde{r}^2+a^2}{2\tilde{r}^2}\ \tilde{r}^2 \sin \theta d\tilde{r} d\theta.
$$
Let $$w := e^{\la} \hat{f}.$$  We then have
$$ \hat{E}(\hat{f}) = \int |\nab w|^2 + p(x) w^2 \ d^3 x,$$
where $p(x)$ is the following (smooth, positive, and radial) potential
$$ p(x) := \Del_{\hat{\bg}} \la + |\nab \la|_{\hat{\bg}}^2 = \frac{a^2}{(\tilde{r}^2 + a^2)^2}.$$
Conclusion:  both $\phi$ and $\psi$ are of the form $e^{-\la} w$ where $w$ is an axially symmetric ($\varphi$-independent) solution of the following equation
$$ -\Del w + p(x) w = 0$$
with $\Del$ the standard Laplacian in $\RR^3$.  Below we shall see that the potential term can be transformed away by going into another coordinate system.

\subsection{The boundary and asymptotic conditions}
Recall that  $\Phi = \phi+i\psi$ is defined on the extension $\zfM$ of the spacetime $\cM_0$.  By the discussion in \ref{symsec} $\Phi\circ \tau_1$ can only be equal to one of the following: $\Phi$, $-\Phi$, $\bar{\Phi}$, and $-\bar{\Phi}$, there is no other possibility.
Suppose we require $\Phi$ to be {\em anti-symmetric}, i.e.
\beq\label{sym1}\phi \circ \tau_1 = - \phi,\qquad \psi \circ \tau_1 = -\psi.\eeq
We recall moreover, that all KN spacetimes, including the zero-$G$ limit $\M_0$ have a discrete reflectional symmetry with respect to the equatorial plane $\theta = \pi/2$.  We further require the electric potential $\phi$ to be symmetric, and the magnetic stream function $\psi$ to be anti-symmetric, with respect to this reflection, i.e.
\beq\label{sym2} \phi(\tilde{r},\pi -\theta) = \phi(\tilde{r},\theta),\qquad \psi(\tilde{r},\pi-\theta) = -\psi(\tilde{r},\theta).\eeq
It follows from \refeq{sym1} and \refeq{sym2} that $\phi$ (if finite) must vanish on the surface of the sphere $\tilde{r}=a$, while $\psi$ (if finite) should vanish on the plane $\theta=\pi/2$:
$$ \phi(a,\theta) = 0,\qquad \psi(\tilde{r},\frac{\pi}{2}) = 0.$$
 In addition we postulate that these potentials vanish at infinity as well, i.e.
$$\lim_{\tilde{r}\to \infty} \phi(\tilde{r},\theta) = 0,\qquad \lim_{\tilde{r}\to \infty} \psi(\tilde{r},\theta) = 0.
$$
Since the potential $p$ is positive, it follows from the maximum principle that one must prescribe a singularity for $\phi$ and $\psi$, otherwise the only solution would be that they are identically zero.  Since $\phi$ and $\psi$ are axially symmetric, the minimal singular support would be  a point on the axis, and if there is an off-axis singularity, then there must be  a circle worth of them.

Next we observe that the singular behavior of the KN electric potential $\phi$ in the $(\rho,z)$ variables is
$$ \phi \sim \frac{c}{\sqrt{|\rho - a|}}\mbox{ as } \rho \to a,$$
which is neither consistent  with a uniform distribution of charged {\em monopoles} on a  ring in a flat spacetime (that behavior would be $\phi \sim c\log |\rho -a|$)  nor is it consistent with a  distribution of {\em dipoles} on the ring (i.e. $\phi \sim c/|\rho-a|$) but in fact it is something in between, in other words, it is a {\em sesqui}-polar singularity.  Indeed, a plotting of the KN electric lines of force (i.e. the direction field for $-\nab \phi$) shows  its asymptotic monopole behavior near infinity (in each sheet), but also a marked departure from either monopole or dipole behavior near the ring, and the presence of {\em four} saddle points on the symmetry axis.  A similar analysis can be performed for the KN magnetic field, revealing an asymptotic dipole in each sheet but  a central singularity inconsistent with any multipolar behavior.  Figure~1 shows the trace in the $(\tilde{r},\theta)$ plane of the KN electric and magnetic fields on a meridional plane $\varphi=$const.  The circle in the two figures is the trace of the sphere $\tilde{r}=1$, where the two sheets are glued together, one sheet being inside and the other one outside that sphere. 

\begin{center}
\begin{figure}[h]
\hspace{.375in}
\includegraphics[scale=.5]{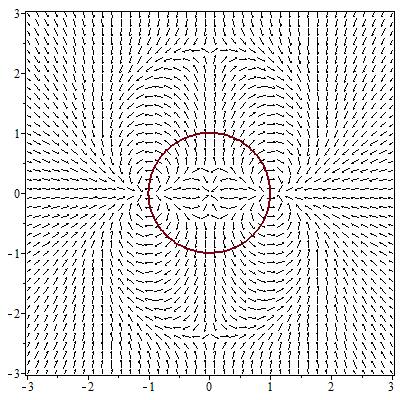}
\includegraphics[scale=.5]{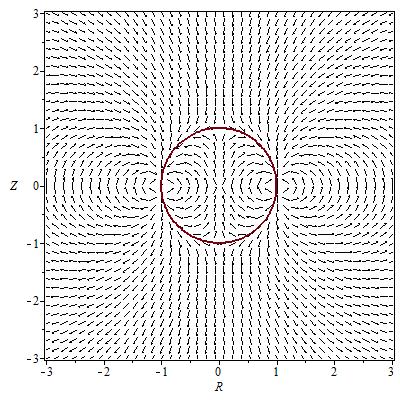}
\caption{\label{fig:zam}Unit tangent fields for the lines of force of the KN electric (left) and magnetic fields (right)}
\end{figure}
\end{center}
Therefore, instead of  characterizing the singular sources in terms of their pole rank, i.e. in terms of Dirac $\delta$-distributions supported on the ring or their distributional derivatives, in the next subsection we will provide a mathematical formulation of a Dirichlet problem with {\em limiting} boundary values for the potentials, which, we will then show, uniquely determines the solution.

\subsection{Ring-centered coordinates, complex formulation, and uniqueness}
It is possible to use complex variable theory to study the above mentioned Dirichlet problems.  This is achieved by introducing a system of coordinates (closely related to oblate spheroidal coordinates) in the extended manifold $\zfM$ that are {\em centered at the ring}:  Let
$$ x:= \frac{r}{a},\qquad y := \cos \theta,\qquad z:= x+iy.$$
$(x,y)$ can be thought of as a system of coordinates on the ramified cover $\Om$ of the two-dimensional quotient manifold $\cQ$.  The domain $\Om$ is  identified with a strip in the complex plane
 $$
 \Om := \{ z\in \Cset\ |\ |\mbox{Im\,}z| \leq 1\}.
 $$
 Expressing the metric $\bg_0$ of $\M_0^*$ in $(x,y,\varphi,t)$ coordinates one obtains
$$
ds_{\bg_0}^2 = -dt^2 +a^2 (1+x^2)(1-y^2) d\varphi^2 + a^2(x^2+y^2)\left(\frac{dx^2}{1+x^2} + \frac{dy^2}{1-y^2}\right)
$$
Thus the Dirichlet energy of a static axisymmetric function $f =f(x,y)$ on $\M_0^*$ is equal to
$$
E(f) = \pi a \int_{-1}^1\int_{-\infty}^\infty (1+x^2)f_x^2 + (1-y^2)f_y^2\ dx dy
$$
and a harmonic $f$ would satisfy
$$
\p_x((1+x^2)f_x) + \p_y((1-y^2)f_y) = 0.
$$
In particular the electromagnetic Ernst potential $\Phi = \phi+i\psi$ must satisfy the above equation.  Putting it in terms of the complex variable $z=x+iy$, this means that $\Phi=\Phi(z,\bar{z})$ solves the following PDE:
$$
\left(|z|^2(\p_z^2+\p_{\bar{z}}^2) + (4+z^2+\bar{z}^2)\p_z\p_{\bar{z}} + 2(z\p_{\bar{z}}+\bar{z}\p_z)\right)\Phi = 0.
$$
It is very easy to see that the only {\em meromorphic} solution to the above equation is (modulo an additive constant)
$$\Phi = \frac{C}{z}$$
for some complex constant $C$.  Similarly, the only anti-meromorphic solution is $\Phi = C/\bar{z}$.

Recall now the symmetry assumptions \refeq{sym1} and \refeq{sym2} we made in the above about the electromagnetic potentials.  In terms of the Ernst potential $\Phi$ these become the following two symmetry conditions
$$
\Phi(-z,-\bar{z}) = -\Phi(z,\bar{z}),\qquad \Phi(\bar{z},z) = \overline{\Phi(z,\bar{z})}.
$$
The first condition is satisfied for the solution we have found, and the second one holds provided the constant $C$ is real.  Writing $C=\sq/a$ and expressing the anti-meromorphic solution in terms of the original variables $(r,\theta)$ one recovers exactly the KN electrostatic potential and magnetic stream function: 
$$\phi =\frac{\sq}{a} \frac{x}{x^2+y^2} =  \frac{\sq r}{r^2+a^2\cos^2\theta},\qquad \psi = \frac{\sq}{a}\frac{y}{x^2+y^2}= \frac{a\sq\cos\theta}{r^2 + a^2\cos^2\theta}.$$\newcommand{\rr}{\textsc{r}}

We now show that each of these two is in fact unique in a much larger class:
Let
$$ \rr := \sqrt{x^2+y^2},$$ and consider the following Dirichlet problem (with prescribed singular behavior) in the open half-strip $$\Om_+ := (0,\infty) \times (-1,1)$$
\bna
\p_x((1+x^2)u_x) + \p_y((1-y^2)u_y) & =&  0 \mbox{ in } \Om_+,\label{eq:u}\\
u(0,y) & = & 0\qquad \forall y,\ 0<|y|<1\label{symu}\\
u(x,1) = u(x,-1) &= & \frac{x}{1+x^2}, \qquad \forall x>0\\
 u(x,y) & = &  \frac{1}{x} + O_1(x^{-2}) \mbox{ as } x\to \infty\label{ab}\\
u(x,y) & =& \frac{x}{\rr^2} + o_1(\frac{1}{\rr})\mbox{ as } \rr \to 0.\label{bu}
\ena
Here the subscript 1 in the notations $O_1$ and $o_1$ simply means that when $u$ is differentiated, the decay rate at infinity improves by one power, and similarly the blowup rate at zero worsens by one power.  We then have
\begin{thm}
Let $u_1$ and $u_2$ be two solutions to the above Dirichlet problem in $C^2(\Om_+)\cap C(\overline{\Om_+}\setminus\{0\})$. Then $u_1\equiv u_2$.
\end{thm}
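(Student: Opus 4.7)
Set $w := u_1 - u_2$. By linearity and cancellation of the common leading singular part $x/\rr^2$, $w$ satisfies the PDE \eqref{eq:u} in $\Omega_+$, vanishes on $\{x = 0\} \cup \{y = \pm 1\}$, obeys $w = O_1(x^{-2})$ at infinity, and most importantly $|w| = o_1(1/\rr)$ as $\rr \to 0$. The task reduces to showing $w \equiv 0$, and the plan has two pieces: a local argument upgrading the $o_1$ bound to the continuous extension $w(0,0) = 0$, followed by a global maximum-principle argument.

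For the local step I change variables via $x = \sinh\xi$, $y = \cos\theta$, which maps $\Omega_+$ diffeomorphically onto $\Omega_+' := (0,\infty) \times (0,\pi)$ and carries the PDE to
\[U_{\xi\xi} + U_{\theta\theta} + \tanh\xi\,U_\xi + \cot\theta\,U_\theta = 0.\]
Under this change, the singular corner $\rr = 0$ is sent to the interior point $p_0 := (0, \pi/2)$ of the flat boundary segment $\{\xi = 0\}$, where moreover both first-order coefficients vanish. Writing $W(\xi, \theta) := w(\sinh\xi, \cos\theta)$, the homogeneous Dirichlet condition $W(0,\theta) \equiv 0$ combined with the PDE evaluated along that line force $W_{\xi\xi}(0,\theta) = 0$ for $\theta \neq \pi/2$. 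Because $\tanh\xi$ is odd in $\xi$ and $\cot\theta$ is independent of $\xi$, the odd reflection $\widetilde W(\xi, \theta) := \sgn(\xi)\,W(|\xi|, \theta)$ therefore provides a $C^2$ solution of the same PDE on a full \emph{punctured} neighborhood of $p_0$, odd in $\xi$ by construction.

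At $p_0$ the equation is an analytic perturbation of $\Delta_{(\xi, \theta)}$, so the classical Frobenius/Kondratiev theory endows any such $\widetilde W$ with an asymptotic development in the modes $s^n \cos(n\phi)$, $s^n \sin(n\phi)$ and their logarithmic companions, where $s := \sqrt{\xi^2 + (\theta - \pi/2)^2}$. The bound $|\widetilde W| = o(1/s)$ eliminates every singular mode $s^{-n}$ with $n \ge 1$, while the parity $\widetilde W(-\xi, \cdot) = -\widetilde W(\xi, \cdot)$ eliminates every even-in-$\xi$ mode---in particular the radial $\log s$, which would otherwise be the sole remaining singular candidate. Hence $\widetilde W$ extends continuously through $p_0$ with value zero; equivalently, $w$ extends continuously through $(0,0) \in \overline{\Omega_+}$ with $w(0,0) = 0$. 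The global conclusion is then a routine maximum principle: on any truncation $\Omega_R := \Omega_+ \cap \{x < R\}$, $w$ is a $C^2$ solution of an elliptic equation with no zeroth-order term that vanishes on the three finite boundary pieces and satisfies $|w| = O(R^{-2})$ on the cross-section $\{x = R\}$, so its extrema tend to zero as $R \to \infty$, forcing $w \equiv 0$. The main obstacle is the local expansion step at $p_0$: one must justify the asymptotic representation of $\widetilde W$, which rests on the classical theory of analytic-coefficient elliptic PDEs at a regular boundary point applied to the regularised geometry obtained by the $(\xi,\theta)$ coordinate change.
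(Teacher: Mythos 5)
Your reduction, the $(\xi,\theta)=(\mathrm{arcsinh}\,x,\arccos y)$ change of variables, the odd reflection across $\{\xi=0\}$, the parity argument killing the $\log$ mode, and the final maximum principle on truncated strips are all sound, and the last step coincides with the paper's. But the pivotal claim --- that ``any such $\widetilde W$'' admits an asymptotic development in the harmonics $s^{n}\cos(n\phi)$, $s^{n}\sin(n\phi)$ and their logarithmic companions, after which the bound $o(1/s)$ prunes the singular terms --- is where the entire difficulty of the theorem lives, and as stated it is both unproved and, taken literally, false: even for the pure Laplacian a solution in a punctured disk can have an essential singularity (e.g.\ $\mathrm{Re}\,e^{1/z}$), so no such expansion exists for a general solution. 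The growth hypothesis must enter the \emph{proof} of the representation, not merely be used afterwards to discard terms; and the standard machinery you invoke does not hand this to you for free. Kondratiev-type expansions require the solution to be placed a priori in a suitable weighted Sobolev space (which the pointwise $o(1/s)$ bound only barely, and not obviously, provides), while the classical isolated-singularity theorems of Bôcher/Gilbarg--Serrin type that do convert a growth bound into a representation $c\,U_0+w$ with $w$ regular are stated for \emph{nonnegative} (or one-sidedly bounded) solutions --- and $\widetilde W$, being odd, is neither.

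This is exactly the obstruction the paper's proof is built to overcome. There, one integrates $u$ in $x$ to form $U(x,y)=\int_x^\infty u$, which converts the $o(1/\rr)$ singularity of $u$ into an at-worst-logarithmic one for $U$; one then proves by hand that $U$ is bounded on one side near the puncture (via the intermediate value theorem on circles where $U$ vanishes, gradient bounds $|\nabla U|\le C/\rr$ deduced from the $o_1$ and $O_1$ hypotheses, and the maximum principle on shrinking annuli), so that the Gilbarg--Serrin theorem for nonnegative solutions applies and yields $U=cU_0+w$; finally $u=cu_0-w_x$ and the $o(1/\rr)$ bound forces $c=0$. To repair your argument you must either supply a proof of the expansion for sign-changing solutions under the $o(1/s)$, $o(1/s^2)$-gradient bounds (in two dimensions this can be done via the Bers--Vekua similarity principle applied to the complex gradient $W_\xi-iW_\eta$, which is genuinely a different and viable route), or else interpose a one-sided-boundedness step as the paper does. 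As a minor additional point, your assertion that the PDE ``evaluated along $\{\xi=0\}$'' forces $W_{\xi\xi}(0,\theta)=0$ is not licensed by the hypothesis $u\in C^2(\Om_+)\cap C(\overline{\Om_+}\setminus\{0\})$, which gives no boundary regularity; the reflection should instead be justified by the Schwarz reflection principle for elliptic equations whose coefficients are compatible with the odd reflection, which needs only continuity up to the boundary.
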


\begin{proof}
Let $u:= u_1-u_2$. Then in $\Om_+$ the function $u$ satisfies, in addition to the equation \refeq{eq:u}, also
$$
u(0,y) =0,\ u(x,1) = u(x,-1) = 0,\ u = O_1(x^{-2}) \mbox{ as }x \to \infty,\ u = o_1(\rr^{-1})\mbox{ as } \rr\to 0.
$$
For $x>0$ and $y\ne 0$ let $$U(x,y) := \int_x^\infty u(x',y) dx'.$$
 Note that $U_x = -u$.  Let $D_h$ denote the  disk of radius $h>0$ centered at the origin of the $xy$-plane, and let $D_h^* := D_h \setminus \{(0,0)\}$ denote the punctured disk.

Let $(x,y) \in D_{1/2}^*$ with $x\geq 0$. Integrating equation \refeq{eq:u} in the variable $x$ on $[x,A]$ and letting $A\to \infty$, using the asymptotic behavior \refeq{ab}, one obtains that $U$ solves the following uniformly elliptic PDE in  the right half disk $D_{1/2}^*\cap \Om_+$:
\beq\label{eq:U}
 LU := (1+x^2)U_{xx} + ((1-y^2)U_y)_y = 0.
 \eeq
Let us extend $U$ by reflection to negative $x$, i.e. for $x<0$ set $$U(x,y) := U(-x,y).$$  The extension is then clearly a solution of \refeq{eq:U} in all of $D_{1/2}^*$. Consider also the functions
$$u_0(x,y) := \frac{x}{x^2+y^2},\qquad U_0(x,y) := - \half \ln (x^2+y^2) + \half \ln(1-y^2).$$
It's easy to check that $U_0$ satisfies $LU_0 = 0$.  Moreover, $\p_xU_0 = -u_0$,  $U_0 \geq 0$ in $D_{1/2}^*$ and $U_0 \to \infty$ as $\rr \to 0$. The following theorem is a classic result:
\begin{thm*} (Gilbarg and Serrin \cite{GilSer54}) Let $L$ be a second-order uniformly elliptic operator in a punctured disk $D_{h}^*$ in $\RR^2$.  Suppose $U_0$ is a nonnegative solution of $LU = 0$ in the punctured disk, and that $U_0\to \infty$ as $\rr \to 0$.  Then for any nonnegative solution $U$ of $LU = 0$ in the punctured disk there exists a constant $c\geq 0$ and a function $w \in C^1(D_h)$ such that
$$U = c U_0 + w.$$
\end{thm*}
Thus all we need to show is that the solution $U$ of \refeq{eq:U} thus constructed is nonnegative in some disk $D_h$.   From the above theorem it would then follow that $$u = cu_0 - w_x.$$ Multiplying this by $\rr$ and taking the limit $\rr\to 0$, using that $u=o(1/\rr)$, and that $\rr u_0$ does not tend to a limit, it then follows that $c=0$, i.e. $u= -w_x$, and thus $u$ is continuous on $D_h$.   In particular, since $u(0,y) =0$ was known for $y\ne 0$ as a result of  \refeq{symu}, it follows that $u(0,0)= 0$.   Thus we have now shown that $u=0$ on all pieces of the boundary of $\Om_+$, and moreover by \refeq{ab} we know $u\to 0$ as $x\to \infty$.  An application of the maximum principal to \refeq{eq:u} in $\Om_+$ would then allow us to conclude that $u\equiv 0$ in $\Om_+$, and thus by symmetry in all of $\Om$.

To show that $U$ is nonnegative, it suffices to prove that $U$ is bounded on one side in $D_h$.  Suppose it is not.  Then
$$\limsup_{\rr\to 0}  U = +\infty,\qquad \liminf_{\rr\to 0}U = -\infty.$$
Since $U$ is assumed continuous in the punctured disk, it follows by the intermediate value theorem that there is a sequence of points $z_m =(x_m,y_m)$ with $z_m \to 0$ as $m\to \infty$ such that $U(z_m) = 0$.  Let $D_m := D_{|z_m|}(0)$.  From the assumption \eqref{bu}  it follows that for a given $\ep>0$ there exists $M>0$ such that $\forall m >M$ and all $(x,y) \in D_m^*$,
$$|U_x(x,y)| =  |u(x,y)| \leq \frac{\ep}{\rr},\qquad |u_y(x,y)| \leq \frac{\ep}{\rr^2}.$$
Meanwhile, from \eqref{ab} we have $u_y(x,y) = O(|x|^{3})$ as $|x|\to \infty$.  It then follows that there exists a constant $C>0$ such that $$\forall m>M,\ \forall (x,y) \in D_m^*,\quad |U_y(x,y)| \leq \int_x^\infty |u_y(x',y)| dx' \leq \frac{C}{r}.$$
Therefore $$|\nab U| \leq \frac{C}{\rr}$$ i.e. the gradient of $U$ is uniformly bounded on each circle.  Let $z\in \Om$ be such that $|z|=|z_m|$.  Then if $A(z_m,z)$ denotes the arc of the circle from $z_m$ to $z$, we have
$$
|U(z)| = |U(z) - U(z_m)| = | \int_{A(z_m,z)} \p_\theta U d\theta | \leq 2\pi C.
$$
I.e. $U$ is bounded uniformly on each circle.  But then by the maximum principle,  we can conclude that $U$ is bounded in a small enough disk $D_h$, contradicting the assumption that it was unbounded on both sides.  It thus follows that either $U\leq K$ or $U\geq -K$ holds in some $D_h$, for some constant $K>0$.  Suppose without loss of generality that it is the latter case: $U+K\geq 0$.  Define $\tilde{U} = U+K$.  Then $\tilde U$ is a nonnegative solution of $LU= 0$ in $D_h$, to which the Gilbarg-Serrin result can be applied.
\end{proof}

Similarly, we can prove the uniqueness of the KN magnetic stream function $\psi$. 
\begin{thm}

 Let $$\Om_0 := \RR \times (0,1)$$ and consider the Dirichlet problem
\bna
((1+x^2)v_x)_x + ((1-y^2)v_y)_y & = & 0 \mbox { in }\Om_0 \label{eq:v}\\
v(x,0) & = & 0\quad \forall x\ne 0\label{vic} \\
v(x,1) & = & \frac{1}{1+x^2}\quad \forall x \in \RR\\
v(x,y) & = &  \frac{y}{x^2} + O_1(x^{-3}) \mbox{ as } |x|\to \infty\label{vab}\\
v(x,y) & = & \frac{y}{\rr^2} + o_1(\frac{1}{\rr})\mbox{ as } \rr \to 0.\label{vbu}
\ena
Then $$v_0(x,y) := \frac{y}{\rr^2}$$ is the only solution of this Dirichlet problem in $C^2(\Om_0)$.  
\end{thm}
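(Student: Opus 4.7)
The plan is to adapt the proof of Theorem 4.1 to the present setting, the two main differences being that $x$ now ranges over all of $\RR$ (rather than over $(0,\infty)$) and the Dirichlet condition providing the natural reflection is on the equatorial line $y=0$ (rather than on $x=0$).

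Setting $w:=v_1-v_2$ for two $C^2(\Om_0)$-solutions, $w$ satisfies the homogeneous equation with $w(x,0)=0$ for $x\ne 0$, $w(x,1)=0$, $w=O_1(x^{-3})$ as $|x|\to\infty$, and $w=o_1(1/\rr)$ as $\rr\to 0$. Since $w=w_x=w_{xx}\equiv 0$ on $\{y=0,\,x\ne 0\}$, evaluating the PDE there forces $w_{yy}(x,0)=0$; hence the odd reflection $w(x,-y):=-w(x,y)$ is a $C^2$ solution of the same PDE on $\Om':=(\RR\times(-1,1))\setminus\{(0,0)\}$, vanishing on $y=\pm 1$.

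I then mimic the key device of the previous proof and set $V(x,y):=\int_x^\infty w(x',y)\,dx'$: the integral converges by the $O_1(x^{-3})$-decay, $V_x=-w$, and integrating the PDE in $x$ over $[x,\infty)$ yields $LV:=(1+x^2)V_{xx}+((1-y^2)V_y)_y=0$ on $\Om'$. The integrated function inherits $V(x,0)=0$ (the integrand vanishes a.e.), $V(x,\pm 1)=0$, and $V\to 0$ as $x\to +\infty$. For the remaining limit $x\to -\infty$, I argue that $A(y):=\int_{-\infty}^{\infty}w(x,y)\,dx$ satisfies $((1-y^2)A'(y))'=0$ (by integrating the PDE over the full $x$-line and using the decay of $(1+x^2)w_x$), whose general solution $A(y)=\tfrac{c}{2}\log\tfrac{1+y}{1-y}+d$ is killed by finiteness at $y=\pm 1$ ($c=0$) together with $A(0)=0$ ($d=0$).

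From here the argument is a close transplant of the previous proof. The conditions $w,\,w_y=o(1/\rr)$ near origin combined with $O_1$-decay at $\pm\infty$ give $|\nabla V|\le C/\rr$ on small circles, and the arc-length trick shows that $V$ is one-sidedly bounded on some disk $D_h$ about the origin. Adding a large constant $K$, $\tilde V:=V+K\ge 0$ is a non-negative solution of $L\tilde V=0$ on $D_h^{*}$, and the Gilbarg--Serrin theorem with the same comparison function $U_0=-\tfrac12\log(x^2+y^2)+\tfrac12\log(1-y^2)$ yields $\tilde V=cU_0+W$ with $c\ge 0$ and $W\in C^1(D_h)$. Differentiating in $x$ gives $-w=-cx/(x^2+y^2)+W_x$; since $\rr\cdot x/(x^2+y^2)=\sgn(x)$ along $y=0$ does not tend to zero while $\rr W_x\to 0$, the condition $\rr w\to 0$ forces $c=0$. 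Thus $V=W-K$ extends continuously to the origin with $V(0,0)=0$ (by continuity from the line $y=0$, on which $V\equiv 0$), and the maximum principle applied to $LV=0$ in $\Om'$---where $V$ vanishes on all boundary pieces ($y=\pm 1$, $|x|=\infty$, and, after extension, the origin)---gives $V\equiv 0$. Therefore $w=-V_x\equiv 0$, i.e.\ $v_1\equiv v_2$. The main obstacle I anticipate is the subsidiary limit analysis at $x\to -\infty$, which has no analogue in Theorem 4.1 and requires the ODE identification of $A(y)$; the remainder of the argument transfers almost verbatim.
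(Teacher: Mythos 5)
Your overall architecture (difference of solutions, an integrated auxiliary function $V$, Gilbarg--Serrin with a logarithmic comparison solution, then $c=0$ and a maximum principle) is the right one, and several of your individual observations (the odd reflection of $w$ across $y=0$ is legitimate because the PDE forces $w_{yy}(x,0)=0$; the constant is killed because $\rr\, x/(x^2+y^2)$ has no limit while $\rr W_x\to 0$) are correct. But there is a genuine gap, and it comes precisely from the one place where you departed from the paper: you integrate in $x$, defining $V(x,y):=\int_x^\infty w(x',y)\,dx'$, whereas the paper defines $V(x,y):=\int_y^1 v(x,y')\,dy'$ and integrates the equation in $y$ (obtaining $L'V:=((1+x^2)V_x)_x+(1-y^2)V_{yy}=0$, with comparison function $V_0=-\tfrac12\ln(x^2+y^2)+\tfrac12\ln(1+x^2)$, and finally applying the maximum principle to $v$ itself rather than to $V$, which also spares you the auxiliary ODE analysis of $A(y)$ at $x\to-\infty$).

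The reason this choice matters is the gradient bound $|\nabla V|\le C/\rr$ on small circles, which is the engine of the one-sided boundedness argument and hence of the applicability of Gilbarg--Serrin. In the paper's version the integration path from $(x,y)$ to $(x,1)$ (with $y>0$ before reflection) never approaches the singularity: one gets $|V_x|\le\int_y^1|v_x|\,dy'\lesssim \ep\min\bigl(|x|^{-1},y^{-1}\bigr)\le C\ep/\rr$, and $|V_y|=|v|\le\ep/\rr$. In your version, for a point with $x<0$ the path $[x,\infty)\times\{y\}$ passes within distance $|y|$ of the origin, and the only available bounds, $|w|\le\ep/\rr'$ and $|w_y|\le\ep/\rr'^2$, integrate to $|V|\lesssim\ep\log(1/|y|)$ and $|V_y|\lesssim\ep/|y|$ --- not $O(1)$ and $O(1/\rr)$ on the circle $|z|=|z_m|$. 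The arc-length estimate $|V(z)-V(z_m)|\le\int|\p_\theta V|\,d\theta$ then picks up $\int d\theta/|\sin\theta|$, which diverges logarithmically as the arc approaches the $x$-axis, so uniform boundedness of $V$ on circles is not established; worse, differentiation under the integral sign in $y$ at points $(x,0)$ with $x<0$ is not justified (the would-be dominating function $\ep/x'^2$ is not integrable across $x'=0$), so it is not even clear that your $V$ is a classical solution of $LV=0$ across the negative $x$-axis. The fix is exactly the paper's: integrate in the variable transverse to the reflection line $y=0$, i.e.\ set $V(x,y)=\int_y^1 v(x,y')\,dy'$ on the upper half punctured disk and extend oddly.
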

\begin{proof}  Let $v := v_1 - v_2$ be the difference of two solutions of the above.  Then in addition to the equation \refeq{eq:v}, $v$ also satisfies
 $$
 v(x,0) = 0,\quad v(x,1) = 0,\quad v(x,y) = O_1(x^{-3})\mbox{ as }x\to \infty,\quad v(x,y) = o_1(\rr^{-1})\mbox{ as }\rr\to 0.
 $$
  By analogy to the electric potential case discussed in the above, for $(x,y) \in D_{1/2}^*\cap \{ y\geq 0\}$ let $$V(x,y) := \int_y^1 v(x,y') dy'.$$
Integrating now \refeq{eq:v} in the variable $y$ on $[y,1]$ we obtain that $V$ is a solution of
$$ L'V := ((1+x^2)V_x)_x +(1-y^2)V_{yy} = 0.$$
We then extend $V$ by reflection to all of $D_{1/2}^*$, i.e. by setting, for $y<0$, $$V(x,-y) := -V(x,y).$$
The extended $V$ is then a solution of $L'V=0$ in the punctured disk $D_{1/2}^*$.  Defining
$$V_0(x,y) := -\half \ln(x^2+y^2) + \half \ln(1+x^2),$$
one sees that $L'V_0 = 0$, $V_0\geq 0$ in $D_{1/2}^*$, and $V_0 \to \infty$ as $\rr \to 0$.  Thus once again by the Gilbarg-Serrin result, for any other solution of $L'V= 0$ that is nonnegative in $D_h^*$ for some $h>0$, there exists $c\geq 0$  and $w' \in C^1(D_h^*)$ such that
\beq V = cV_0 + w'.\label{VV0}\eeq
Proof of nonnegativity of $V$ in $D_h^*$ is the same as that of $U$ in the above.
Differentiating \refeq{VV0} in $y$, we have $v = cv_0 - w'_y$ and once again multiplying this by $\rr$ and taking the limit $\rr\to 0$ we obtain that $c=0$ and that $v$ is continuous in $D_h$.  Thus $v$ is a classical solution of \refeq{eq:v} with zero Dirichlet data and zero asymptotic value, so that by maximum principle $v\equiv 0$.
\end{proof}

We conclude by noting that any constant multiple of $\psi$ would also be a solution of \refeq{eq:v} and a legitimate magnetic stream function. If we do not scale $\phi$ in the same way, all that happens is that  $\phi$ and $\psi$  will no longer be the real and imaginary parts of the same meromorphic function, but otherwise they represent physically meaningful electric and magnetic fields that may still be viewed as coming from the sources with the same singular support but with different strengths.  Hence, one can come up with a  generalization of the zGKN family by considering, on the same background (which solves the Einstein vacuum equations) and for any pair of numbers $(\sq,\textsc{i})$ the potentials
$$
\phi = \frac{\sq r}{r^2+ a^2\cos^2\theta},\qquad \psi =  \frac{\textsc{i}\pi a^2\cos\theta}{r^2  + a^2\cos^2\theta}
$$
where $\textsc{i}$ is the  electric current in a loop of radius $a$ (in a single-sheeted space)  that would produce a magnetic dipole of the same strength as  the KN field.  Since in the $G\to 0$ limit Einstein's equations decouple from Maxwell's, the zero-$G$ Kerr manifold, which is a solution of Einstein's vacuum equations, together with the  electromagnetic field produced by the above potentials will be a solution of the Einstein--Maxwell system.  This generalization will be further studied  in the companion paper \cite{KieTahIa}.

\section{Summary and outlook}
The results of this paper come closest to answering the question ``What is the source of a KN metric?"   
The interior of the KN space-time with $G>0$ is too pathological for one to be able to speak of ``spinning sources" in any conventional sense that has been given to these words so far.   However,  in a zero-gravity
$G\to0$ limit the pathologies go away, leaving behind no detectable geometric trace, but only  a topological remnant in the form of
multi-sheetedness of the spacetime. 
We have studied Maxwell's equations on this background, and have given an {\em ab initio} derivation of the KN fields as stationary solutions coming from charge and current distributions that are supported only on the ring singularity of the two-sheeted space. We have analyzed the behavior of the KN fields near the ring and have  proven that their singularity corresponds neither to a monopole nor a dipole, but something in between, ie. it is {\em sesqui-polar}.  We have proved that the KN potentials are unique in the class of all potentials with the same prescribed blow-up rate at the singular ring.  

In the companion paper \cite{KieTahIa} we embark on a thorough study of the Dirac equation on a zGKN background, and show how the double-sheeted nature of the space is reflected in the surprising spectral properties of the Dirac Hamiltonian on this background.

\bigskip 

\noindent{\bf Acknowledgements. } The author is extremely grateful to Michael Kiessling for being a constant source of ideas and encouragement throughout the years, and for many hours of fruitful discussion on this particular project.

\small
\baselineskip=13pt
\bibliographystyle{plain}
\def\cprime{$'$} \def\cprime{$'$}

\end{document}